\documentclass[12pt]{article}

\usepackage{amsmath,amssymb,amsthm}
\usepackage{mathrsfs}
\usepackage{stmaryrd}
\usepackage{geometry}
\usepackage[hidelinks]{hyperref}

\usepackage{authblk}

\title{\bf Grothendieck Topologies and Sheaf-Theoretic Foundations of
Cryptographic Security: \\
Attacker Models and $\Sigma$-Protocols as the First Step}

\author{\Large Takao Inou\'{e}}

\affil{\large Faculty of Informatics, Yamato University, \\ Osaka, Japan\footnote{Email: inoue.takao@yamato-u.ac.jp; \\ Personal Email: takaoapple@gmail.com \\ [I prefer my personal email address for correspondence.]}} 

\date{February 19, 2026}

\newtheorem{theorem}{Theorem}[section]
\newtheorem{proposition}{Proposition}[section]
\newtheorem{definition}{Definition}[section]

\newtheorem{corollary}{Corollary}[section]
\newtheorem{remark}{Remark}[section]

\begin{document}
\maketitle

\begin{abstract}
Cryptographic security is traditionally formulated using game-based
or simulation-based definitions.
In this paper, we propose a structural reformulation of cryptographic
security based on Grothendieck topologies and sheaf theory.

Our key idea is to model attacker observations as a Grothendieck site,
where covering families represent admissible decompositions of partial
information determined by efficient simulation.
Within this framework, protocol transcripts naturally form sheaves,
and security properties arise as geometric conditions.

As a first step, we focus on $\Sigma$-protocols.
We show that the transcript structure of any $\Sigma$-protocol defines
a torsor in the associated topos of sheaves.
Local triviality of this torsor corresponds to zero-knowledge,
while the absence of global sections reflects soundness.
A concrete analysis of the Schnorr $\Sigma$-protocol is provided
to illustrate the construction.

This sheaf-theoretic perspective offers a conceptual explanation
of simulation-based security and suggests a geometric foundation
for further cryptographic abstractions.
\medskip

\noindent Keywords:  Cryptographic security,
zero-knowledge proofs,
$\Sigma$-protocols,
Grothendieck topology,
sheaf theory,
attacker models,
simulation-based security,
torsors.
\medskip

\noindent MSC2020: 94A60, 18F20, 11T71.
\end{abstract}

\tableofcontents

\section{Introduction}

Cryptographic security is commonly defined using game-based or
simulation-based frameworks.
While these approaches are operationally effective,
they often obscure the structural reasons underlying security guarantees.

In this paper, we propose a geometric reformulation of cryptographic
security based on Grothendieck topologies and sheaf theory.
The central idea is to interpret attacker observations as forming a site,
where coverings encode admissible decompositions of partial information
determined by efficient simulation.

Within this framework, protocol transcripts assemble into sheaves,
and security properties emerge as intrinsic geometric features.
As a first step toward a general theory,
we focus on $\Sigma$-protocols and show that their transcript structures
naturally define torsors in an associated topos.

The present paper focuses on the core structural correspondence 
between $\Sigma$-protocols and Grothendieck-topological models; further 
technical refinements and extensions are left for future work.

\section{Related Work}

\subsection{Simulation-Based and Zero-Knowledge Security}

Zero-knowledge proofs were introduced by Goldwasser, Micali, and Rackoff
as an interactive proof system in which the verifier learns nothing
beyond the validity of the statement \cite{GMR}.
Subsequent work established simulation-based security as the standard
formalization of zero-knowledge, emphasizing the existence of efficient
simulators rather than explicit structural properties.

$\Sigma$-protocols were introduced as a fundamental class of
three-move interactive proof systems with special soundness and
honest-verifier zero-knowledge \cite{Damgard}.
They play a central role in cryptographic constructions, including
identification schemes and proof systems for discrete logarithm
relations \cite{Schnorr}.

While simulation-based definitions are operationally effective,
they remain external to the protocol structure itself.
The present work differs in that simulation is not treated as an
auxiliary algorithm but is instead encoded geometrically via
Grothendieck topologies on attacker observations.

\subsection{Game-Based Security and Attacker Models}

Game-based definitions of security have become a dominant paradigm
in modern cryptography, particularly in the analysis of complex
protocols and compositions \cite{BellareRogaway}.
These approaches model security as the inability of an efficient
adversary to distinguish between games.

Although powerful, game-based models often abstract away the internal
structure of information flow.
In contrast, the sheaf-theoretic approach adopted here makes
information restriction and recombination explicit by modeling
attacker views as objects in a site.
Covering families encode admissible decompositions of information,
rather than adversarial strategies.

\subsection{Categorical and Topos-Theoretic Approaches to Security}

Category-theoretic methods have been applied to cryptography in several
contexts, including compositional security frameworks and protocol
semantics \cite{Mitchell,AbadiGordon}.
More recently, topos-theoretic and logical methods have been explored
in the semantics of computation and information flow \cite{AbramskyCoecke}.

Sheaf-theoretic techniques have been used extensively in logic,
geometry, and distributed systems to formalize locality and
context-dependence.
However, their application to cryptographic security remains limited.

The present work is distinguished from existing categorical approaches
by its explicit use of Grothendieck topologies to model attacker
observations.
Rather than focusing on compositional semantics or process calculi,
we interpret cryptographic security properties as intrinsic geometric
features of sheaves and torsors in an associated topos.

\subsection{Position of the Present Work}

This paper does not aim to replace existing cryptographic security
definitions.
Instead, it provides a conceptual and geometric reinterpretation of
simulation-based security for $\Sigma$-protocols.

By identifying zero-knowledge with local triviality and soundness with
the absence of global sections, we offer a structural explanation of
well-known cryptographic phenomena.
This perspective suggests a unifying framework in which different
security notions may be compared and generalized using geometric tools.

\section{Attacker Models as Grothendieck Sites}

We model attacker observations by a category whose objects represent
partial views of protocol executions and whose morphisms correspond
to restriction of information.
This category captures the ways in which an attacker may lose or forget
information.

A Grothendieck topology is imposed by declaring a family of morphisms
to be covering if the information in the target object can be efficiently
simulated from the source objects.
We refer to this topology as the \emph{attacker topology}.

\section{The Schnorr $\Sigma$-Protocol as an Attacker Sheaf}

We illustrate the framework using the Schnorr $\Sigma$-protocol.
Attacker views are modeled as partial transcripts
\[
a,\quad (a,e),\quad (a,e,z),
\]
with morphisms given by erasure of transcript components.

Covering families correspond to simulatable decompositions of views.
The associated transcript presheaf assigns to each view the set of
compatible internal randomness values.
This presheaf admits a natural action by re-randomization of the prover's
randomness, yielding a torsor structure.

This concrete example serves as a guiding instance for the general
theory developed below.

\section{Example: The Schnorr $\Sigma$-Protocol as a Sheaf over Attacker Views}

We illustrate our framework by modeling the Schnorr $\Sigma$-protocol
as a sheaf over a Grothendieck site encoding attacker observations.

\begin{definition}[Attacker Observation Category]
Let $G$ be a cyclic group of prime order $q$ with generator $g$.
Let $y = g^x$ be a public value, where $x \in \mathbb{Z}_q$ is the prover's secret.

We define a category $\mathcal{C}_{\mathrm{att}}$ as follows.
\begin{itemize}
  \item Objects of $\mathcal{C}_{\mathrm{att}}$ are \emph{attacker views}, represented by
  partial transcripts of the Schnorr protocol, such as
  \[
  a = g^r,\quad (a,e),\quad (a,e,z),
  \]
  where $e \in \mathbb{Z}_q$ is a challenge and $z = r + ex$ is a response.
  \item Morphisms are restriction maps corresponding to forgetting components
  of transcripts, e.g.
  \[
  (a,e,z) \to (a,e), \qquad (a,e) \to a.
  \]
\end{itemize}
This category is a poset-like category encoding information loss
under attacker observation.
\end{definition}

\begin{definition}[Attacker Grothendieck Topology]
A family of morphisms $\{U_i \to U\}$ in $\mathcal{C}_{\mathrm{att}}$
is declared a covering family if the attacker can computationally
simulate the joint distribution on $U$ from the data available on the $U_i$.

This Grothendieck topology is called the \emph{attacker topology}.
\end{definition}

Covering families are understood up to computational indistinguishability,
rather than exact reconstruction.

Eexample. 
In the Schnorr protocol, the family
\[
\{ (a,e) \to a,\; (a,z) \to a \}
\]
is a covering family of the object $a$, since the distribution of commitments
can be simulated without knowledge of the secret $x$.

\begin{definition}[Transcript Presheaf]
Define a presheaf
\[
\mathcal{F} : \mathcal{C}_{\mathrm{att}}^{op} \to \mathbf{Set}
\]
by assigning to each attacker view the set of transcripts consistent
with that observation.
Explicitly,
\begin{align*}
\mathcal{F}(a) &= \{ r \in \mathbb{Z}_q \mid a = g^r \},\\
\mathcal{F}(a,e) &= \{ (r,z) \in \mathbb{Z}_q^2 \mid z = r + ex \},\\
\mathcal{F}(a,e,z) &= 
\begin{cases}
\{\ast\}, & \text{if } g^z = a \cdot y^e,\\
\varnothing, & \text{otherwise}.
\end{cases}
\end{align*}
Restriction maps are induced by forgetting transcript components.
\end{definition}

\begin{proposition}
The presheaf $\mathcal{F}$ satisfies the sheaf condition
with respect to the attacker topology.
\end{proposition}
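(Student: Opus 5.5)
The plan is to reduce the sheaf condition to a small number of covering families, since $\mathcal{C}_{\mathrm{att}}$ is a poset-like category with only three kinds of objects: $a$, $(a,e)$ and $(a,e,z)$. For each object $U$ and each covering family $\{U_i \to U\}$ of the attacker topology, I must show that the map
\[
\mathcal{F}(U) \longrightarrow \mathrm{Eq}\Bigl(\prod_i \mathcal{F}(U_i) \rightrightarrows \prod_{i,j} \mathcal{F}(U_i \times_U U_j)\Bigr)
\]
is a bijection. Because every morphism is a forgetting map, the fibre products $U_i \times_U U_j$ can be computed as the view retaining the common components of $U_i$ and $U_j$. Compatibility then just says that the local sections agree on the shared components.

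The first step is the trivial coverings. Any family containing an identity, and any singleton family consisting of an isomorphism, satisfies the sheaf condition for every presheaf. I would also observe that the terminal-type objects $(a,e,z)$ only admit such trivial covers. They are maximal, so nothing with more information maps onto them. The second step treats the covers of $(a,e)$ by the views above it, namely the family $\{(a,e,z) \to (a,e)\}$ with $z$ ranging over responses. I would use the explicit description $\mathcal{F}(a,e) = \{(r,z) : z = r + ex\}$. A section is determined by $r$, since $z$ is then forced. A compatible family of sections $\ast \in \mathcal{F}(a,e,z)$ is nonempty exactly when $g^z = a y^e$, and that happens exactly for $z = r + ex$ with $g^r = a$. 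So the compatible families are in bijection with $\mathcal{F}(a,e)$, which gives both gluing and uniqueness.

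The third step is the distinguished covering $\{(a,e) \to a,\ (a,z) \to a\}$ of $a$ from the earlier example. Here I would glue along $r$. A section over $(a,e)$ is a pair $(r,z)$, and a section over $(a,z)$ is read analogously. Both restrict to $r \in \mathcal{F}(a)$. Compatibility over the overlap, which is $a$ itself, says the two $r$'s agree. The glued section is this common $r$, and it is unique because $\mathcal{F}(a)$ is the singleton $\{\log_g a\}$ in $\mathbb{Z}_q$. Uniqueness is therefore automatic from $G$ having prime order $q$, and existence comes from the restriction maps.

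The main obstacle I expect is that the attacker topology is defined only informally. Covers are families from which the joint distribution is simulatable, up to computational indistinguishability. This makes the class of covering families neither explicitly enumerated nor obviously closed under the Grothendieck axioms. My approach is to fix the topology generated by the explicit covers above, namely identities, the response-indexed covers of $(a,e)$ and the example cover of $a$. I would then verify the sheaf condition on these generators only. This is justified by the standard fact that a presheaf satisfying the condition on a generating family of covers (a basis) is a sheaf for the generated topology. The simulation property is used only to certify that these families are admissible covers. It plays no role in the gluing argument itself, which is purely set-theoretic, because each $\mathcal{F}(U)$ is at most a singleton determined by the discrete logarithm $r$.
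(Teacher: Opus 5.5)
Your second step does not check the sheaf condition. For the family $\{(a,e,z)\to(a,e)\}_{z\in\mathbb{Z}_q}$, a matching family is an element of the product $\prod_z\mathcal{F}(a,e,z)$, i.e.\ a section over \emph{every} member of the cover. It is not a choice of the one $z$ over which a section exists. Since $\mathcal{F}(a,e,z)=\varnothing$ for the $q-1$ responses with $g^z\neq ay^e$, this product is empty, so there are no matching families at all while $\mathcal{F}(a,e)\neq\varnothing$. The one-element set you counted is the coproduct $\coprod_z\mathcal{F}(a,e,z)$. Worse, each arrow $(a,e,z)\to(a,e)$ with $z$ rejecting needs a restriction map $\mathcal{F}(a,e)\to\varnothing$ out of a nonempty set, so on the category you are using $\mathcal{F}$ is not even a presheaf. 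Your ``bijection'' also silently adds the condition $g^r=a$: as written, $\mathcal{F}(a,e)=\{(r,z): z=r+ex\}$ has $q$ elements.

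In the third step the variance is reversed and the overlap is wrong. The arrow $(a,e)\to a$ induces $\mathcal{F}(a)\to\mathcal{F}(a,e)$, not $(r,z)\mapsto r$. The fibre product $(a,e)\times_a(a,z)$ must map to both factors, so it is the view carrying $a$, $e$ and $z$ together; this is the same error as your opening claim that fibre products keep the \emph{common} components. Correctly oriented, uniqueness is automatic because $\mathcal{F}(a)$ is a singleton. Existence, however, needs every $s_1\in\mathcal{F}(a,e)$ occurring in a matching family to be the image of that single element. When $(e,z)$ is accepting, compatibility is tested in $\mathcal{F}(a,e,z)=\{\ast\}$ and imposes nothing. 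So with $\mathcal{F}(a,e)$ as written (and $\mathcal{F}(a,z)\neq\varnothing$), gluing fails.

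These defects are partly inherited from the paper's definitions, but a proof has to repair them rather than argue around them. The repair your argument tacitly assumes has three parts: keep only views that extend to an accepting transcript (so the only arrow $(a,e,z)\to(a,e)$ has $z=\log_g a+ex$), impose $g^r=a$ in $\mathcal{F}(a,e)$, and define $\mathcal{F}(a,z)$ analogously. Then every $\mathcal{F}(U)$ is a singleton and $\mathcal{F}$ is the terminal presheaf. The terminal presheaf is a sheaf for \emph{every} Grothendieck topology, hence for any precise version of the attacker topology, so neither the case analysis nor your choice of topology is needed.

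The reduction to generators was in any case unjustified as stated. Checking generating families suffices only when they are stable under pullback (a coverage); otherwise you must also check all pullbacks of the generating sieves. Yours are not stable. Pulling back the cover of $a$ along $(a,e')\to a$ for suitable $e'\neq e$ gives a sieve on $(a,e')$ that is empty or generated by the single arrow $(a,e',z)\to(a,e')$, and it contains none of your covers of $(a,e')$.

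The paper's sketch attributes gluing to the Schnorr simulator up to indistinguishability. You are right that the simulator plays no role in the set-theoretic sheaf condition. But the honest conclusion is then that, once $\mathcal{F}$ is made well defined, the proposition is trivial and carries no cryptographic content, and your write-up should say so.
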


\begin{proof}[Proof sketch]
Given compatible local transcript data on a covering family,
a global transcript can be reconstructed up to computational
indistinguishability by the standard Schnorr simulator.
This corresponds precisely to the gluing condition for sheaves.
\end{proof}

\begin{proposition}
The sheaf $\mathcal{F}$ is a torsor under the additive action of
$\mathbb{Z}_q$ induced by re-randomization of the prover's nonce.
\end{proposition}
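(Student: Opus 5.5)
The plan is to show that $\mathbb{Z}_q$ acts on $\mathcal{F}$ compatibly with restrictions, that the action on each nonempty section set is free and transitive, and that $\mathcal{F}$ is locally inhabited for the attacker topology.

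\textbf{The action.} The constant sheaf $\underline{\mathbb{Z}_q}$ is to act on $\mathcal{F}$ by translating the nonce. For $s\in\mathbb{Z}_q$ I would set
\[
s\cdot r = r+s \ \text{on } \mathcal{F}(a), \qquad
s\cdot(r,z) = (r+s,\,z+s) \ \text{on } \mathcal{F}(a,e),
\]
and take the trivial action on the singleton $\mathcal{F}(a,e,z)$. Shifting $r$ and $z$ together preserves the relation $z=r+ex$, since $(r+s)+ex=z+s$. The restriction $\mathcal{F}(a,e)\to\mathcal{F}(a)$ that forgets $z$ then commutes with the action, so this is an action of $\mathbb{Z}_q$ on the presheaf. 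Strictly speaking, changing $r$ changes $a=g^r$. So the action is really by morphisms relating the views $a$ and $a\cdot g^s$, and I would read $\mathcal{F}$ fibred over all commitments: each $\mathcal{F}(a)$ is taken inside $\coprod_{a}\mathcal{F}(a)$, and $\mathcal{F}(a,e)$ is identified with $\mathbb{Z}_q$ via $r$.

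\textbf{Free and transitive.} On the fibred sets $\coprod_a\mathcal{F}(a)\cong\mathbb{Z}_q$ and $\mathcal{F}(a,e)\cong\{(r,r+ex)\}\cong\mathbb{Z}_q$, translation is simply transitive. The shear map
\[
\mathbb{Z}_q\times\mathcal{F}\to\mathcal{F}\times\mathcal{F}, \qquad (s,t)\mapsto(t,\,s\cdot t),
\]
is therefore a bijection on each object. Inverting it requires subtracting nonces, $t'-t$, which is a well-defined group operation in $\mathbb{Z}_q$.

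\textbf{Local inhabitation.} It remains to show that $\mathcal{F}\to 1$ is an epimorphism of sheaves: every view should be covered by views on which $\mathcal{F}$ has a section. Here I would invoke the covering family $\{(a,e)\to a,\ (a,z)\to a\}$ from the Example, together with the Schnorr simulator. Choose $e,z$ uniformly and set $a=g^zy^{-e}$. This produces accepting transcripts, so $\mathcal{F}(a,e,z)\neq\varnothing$, and it yields elements of $\mathcal{F}(a,e)$ over the covering. Hence $\mathcal{F}$ has local sections. Combined with the sheaf property from the previous Proposition, this makes $\mathcal{F}$ a $\mathbb{Z}_q$-torsor in the topos of sheaves.

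\textbf{Main obstacle.} The hard part is the precise meaning of the objects. As literally defined, $\mathcal{F}(a)$ is a single point, since $r=\log_g a$ is uniquely determined. A single point does not admit a free transitive $\mathbb{Z}_q$-action, so the torsor statement must be interpreted with $a$ varying. I would first try to formalise this by replacing the objects with views parametrised by $(a,e)$, letting $\mathbb{Z}_q$ act on the site by $a\mapsto ag^s$ and $\mathcal{F}$ be equivariant. Failing that, I would state the torsor property for the total sheaf over the discrete family of commitments. I also expect the argument to lean on the informal, up-to-indistinguishability notion of covering rather than a genuinely set-theoretic verification.
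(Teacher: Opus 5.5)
The paper gives no proof of this Proposition. The nearest argument is the sketch for the general Theorem on $\Sigma$-protocols as torsors, which only asserts that the simulator supplies local trivializations and never examines the action. Your three-part plan (action, shear isomorphism, local inhabitation) is the right thing to check, and your ``main obstacle'' is correctly identified. But it is a genuine gap, not a technicality.

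Since $r\mapsto g^r$ is a bijection $\mathbb{Z}_q\to G$, each $\mathcal{F}(a)$ is a point. Unless the empty sieve covers $a$, the canonical map $\mathbb{Z}_q\to\underline{\mathbb{Z}_q}(a)$ is injective. So the shear map at $a$ goes from a set with at least $q$ elements onto the point $\mathcal{F}(a)\times\mathcal{F}(a)$, and $\mathcal{F}$ is not a torsor.

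Your translation $r\mapsto r+s$ does not avoid this, because it sends $\mathcal{F}(a)$ to $\mathcal{F}(ag^s)$. It is therefore not a morphism $\underline{\mathbb{Z}_q}\times\mathcal{F}\to\mathcal{F}$ of sheaves on $\mathcal{C}_{\mathrm{att}}$. Such a morphism acts objectwise, hence trivially on each $\mathcal{F}(a)$. Moreover, $\coprod_a\mathcal{F}(a)$ is not a value of $\mathcal{F}$, so simple transitivity there says nothing about $\mathcal{F}$. The repairs you list change the statement and are not carried out.

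Other steps also fail as written:

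\begin{itemize}
\item You let $\mathbb{Z}_q$ act trivially on $\mathcal{F}(a,e,z)=\{\ast\}$. The shear map there is then $\mathbb{Z}_q\times\{\ast\}\to\{\ast\}\times\{\ast\}$, so ``bijection on each object'' fails at accepting triples as well. The only exception is if the empty sieve covers them, which would make $\underline{\mathbb{Z}_q}$ trivial there.
\item The map $(r,z)\mapsto r$ does not land in $\mathcal{F}(a)$, because the paper's $\mathcal{F}(a,e)$ omits the constraint $g^r=a$. Restoring that constraint makes $\mathcal{F}(a,e)$ a point too. So the only $q$-element value in the picture is an artefact of the omission.
\item With the arrow $(a,e)\to a$ and $\mathcal{F}$ contravariant, the structure map runs $\mathcal{F}(a)\to\mathcal{F}(a,e)$, opposite to the direction you use. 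The paper is itself inconsistent here, but you must fix a direction before checking equivariance.
\item The simulator, which is also the paper's only ingredient, does not give local inhabitation. It samples $a=g^zy^{-e}$ itself, so it produces no section over a prescribed view. At $a$ and $(a,e)$ inhabitation is trivial anyway.
\item If non-accepting triples are objects, as the ``otherwise'' clause suggests, $\mathcal{F}$ is empty on them. Such a triple receives only its identity arrow, so its only sieves are the maximal one and the empty one. The empty sieve cannot cover, since the sheaf $\mathcal{F}$ would then be a point there. Hence $\mathcal{F}\to 1$ is not an epimorphism.
\end{itemize}

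What is missing is a corrected site and presheaf on which the claim actually holds. Neither your proposal nor the paper supplies one.
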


\begin{remark}
The torsor $\mathcal{F}$ is locally trivial with respect to the
attacker topology, but admits no global section selecting the secret $x$.
Thus,
\begin{itemize}
  \item zero-knowledge corresponds to local triviality,
  \item knowledge soundness corresponds to the absence of global sections.
\end{itemize}
\end{remark}

\section{General $\Sigma$-Protocols as Sheaves and Torsors}

We now generalize the previous example to arbitrary $\Sigma$-protocols
and show that their security properties admit a uniform sheaf-theoretic formulation.
\medskip

\noindent \bf Remark\rm . The acting group is induced by re-randomization of the prover's internal randomness.

\begin{definition}[$\Sigma$-Protocol]
A \emph{$\Sigma$-protocol} consists of three polynomial-time algorithms
\[
(\mathsf{Commit}, \mathsf{Challenge}, \mathsf{Response})
\]
satisfying completeness, special soundness, and honest-verifier zero-knowledge.
The protocol produces transcripts of the form $(a,e,z)$.
\end{definition}

\begin{definition}[Category of Attacker Views]
Let $\Pi$ be a $\Sigma$-protocol.
We define the \emph{attacker observation category} $\mathcal{C}_{\Pi}$ as follows.
\begin{itemize}
  \item Objects are partial transcripts obtained by hiding some components
  of $(a,e,z)$.
  \item Morphisms are restriction maps corresponding to erasing information.
\end{itemize}
Thus $\mathcal{C}_{\Pi}$ encodes all possible observational contexts
available to an attacker.
\end{definition}

\begin{definition}[Attacker Topology]
A family of morphisms $\{U_i \to U\}$ in $\mathcal{C}_{\Pi}$
is a covering family if there exists a probabilistic polynomial-time simulator
that can reconstruct the distribution on $U$ from the data on the $U_i$.

This Grothendieck topology is called the \emph{attacker topology}
associated with $\Pi$.
\end{definition}

\begin{definition}[Transcript Presheaf]
Define a presheaf
\[
\mathcal{F}_{\Pi} : \mathcal{C}_{\Pi}^{op} \to \mathbf{Set}
\]
by assigning to each attacker view $U$ the set of transcripts
consistent with that view.
\end{definition}

\begin{theorem}[Zero-Knowledge as the Sheaf Condition]
Let $\Pi$ be a $\Sigma$-protocol.
If $\Pi$ satisfies honest-verifier zero-knowledge,
then the presheaf $\mathcal{F}_{\Pi}$ is a sheaf
with respect to the attacker topology.
\end{theorem}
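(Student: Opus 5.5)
The plan is to check the gluing axiom directly on covering families of $\mathcal{C}_{\Pi}$, following the proof sketch of the sheaf proposition for the Schnorr case. I will use honest-verifier zero-knowledge only to produce the glued section. Since $\mathcal{C}_{\Pi}$ is poset-like, I first reduce to the finite family of objects obtained from $(a,e,z)$ by hiding components. Every covering family $\{U_i \to U\}$ is then a finite family of erasure maps. Two sections $s_i \in \mathcal{F}_{\Pi}(U_i)$ and $s_j \in \mathcal{F}_{\Pi}(U_j)$ are compatible exactly when they agree on the components visible in both views, that is, on the restriction to the common coarsening $U_i \times_U U_j$. Fibre products exist here because they are computed by forgetting the union of the hidden components.

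For a compatible family $(s_i)$, I intend to produce the glued section as follows. The data $(s_i)$ fix certain transcript components, say the commitment and the challenge. By the definition of the attacker topology, there is a PPT simulator reconstructing the distribution on $U$ from the data on the $U_i$. Concretely, I take the HVZK simulator $\mathsf{Sim}$ of $\Pi$. Run on the challenge determined by the $s_i$, it outputs an accepting transcript $(a,e,z)$, and this transcript restricts to each $s_i$ up to computational indistinguishability. The element of $\mathcal{F}_{\Pi}(U)$ obtained by forgetting the components hidden in $U$ is the candidate $s$, and by construction $s|_{U_i} \approx s_i$ for every $i$.

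Uniqueness of $s$ is to be read, as the paper stipulates, modulo computational indistinguishability. If $s$ and $s'$ both restrict to the $s_i$, their distributions agree on each $U_i$. Because the family is covering, the distribution on $U$ is determined by the data on the $U_i$ through the simulator, so $s \approx s'$. Completeness guarantees that the honest transcripts land in $\mathcal{F}_{\Pi}(U)$, so this equivalence class is nonempty.

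I expect the main obstacle to be the equality-versus-indistinguishability issue. The presheaf $\mathcal{F}_{\Pi}$ is set-valued, but the simulator only matches distributions. To handle this, I would first pass to the quotient presheaf $\mathcal{F}_{\Pi}/\!\approx$ of sections modulo computational indistinguishability and prove the sheaf condition there. That route requires checking that restriction respects $\approx$, which holds because erasure is a PPT post-processing map. It also requires showing that the simulator's output, although a distribution, yields a well-defined class, and this is exactly where HVZK is essential. A secondary point is that the covering families must be closed under pullback, i.e.\ that the attacker topology is genuinely a Grothendieck topology. I would verify this by composing simulators.
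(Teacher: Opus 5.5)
Your route is the paper's own sketch: the HVZK simulator supplies the glued section, and everything is read modulo computational indistinguishability. That sketch passes over exactly the obstacle you flag, and your proposal does not close it.

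\textbf{Existence.} The step fails as written. The HVZK simulator does not take the components fixed by the $s_i$ as input; it samples its own commitment (for Schnorr it picks $e,z$ and sets $a=g^{z}y^{-e}$). So its output does not restrict to the given $s_i$. It cannot be made to either, since completing a prescribed $(a,e)$ to an accepting $(a,e,z)$ amounts to computing $z=\log_g(a\,y^{e})$.

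\textbf{The quotient by $\approx$.} This does not rescue the argument. Computational indistinguishability relates distribution ensembles indexed by a security parameter, not elements of the set $\mathcal{F}_{\Pi}(U)$. On individual sections it is either undefined or collapses to equality, because a distinguisher holding $s_i$ can just test for equality with it.

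\textbf{Uniqueness.} This step has the same defect and is also circular. The claim ``because the family is covering, the distribution on $U$ is determined by the data on the $U_i$'' merely restates the definition of a covering, and agreement of marginals does not determine a joint distribution. Moreover, the covering's own simulator already does everything you ask of $\mathsf{Sim}$, so the HVZK hypothesis does no work in your argument.

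\textbf{The categorical set-up.} This has to be repaired before any gluing argument can start. The paper's convention is that erasures are arrows $U_i\to U$, so $U$ is coarser than each $U_i$, as in the cover $\{(a,e)\to a,\ (a,z)\to a\}$. With that convention the pullback $U_i\times_U U_j$ is the common refinement, which hides only the components hidden in both views. It is not the common coarsening you describe. Likewise, the restriction $\mathcal{F}_{\Pi}(U)\to\mathcal{F}_{\Pi}(U_i)$ passes from a coarser view to a finer one, which no forgetting map implements. So the step ``forget the simulated transcript down to $U$, then restrict to $U_i$'' does not type-check.

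\textbf{The problem is not merely technical.} Take the cover above with $(a,e,z)$ accepting. Then $\mathcal{F}(a,e,z)$ is a point, so every pair in $\mathcal{F}(a,e)\times\mathcal{F}(a,z)$ is a matching family. That gives either none or at least $q$ matching families, while $\mathcal{F}(a)$ has exactly one element. The set-level sheaf condition therefore fails outright, and any equivalence restoring it must identify all $q$ elements of $\mathcal{F}(a,e)$.

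A meaningful version requires first fixing the variance and redefining $\mathcal{F}_{\Pi}$, e.g.\ as a presheaf of distribution ensembles. Only then can one ask what HVZK contributes.
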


\begin{proof}[Proof sketch]
Honest-verifier zero-knowledge provides a simulator that produces
indistinguishable transcripts locally on each attacker view.
Compatibility of local data corresponds to consistency of simulated transcripts,
and the existence of a global transcript up to indistinguishability
corresponds to the gluing axiom of sheaves.
\end{proof}

\begin{definition}[Re-randomization Group]
Let $G_{\Pi}$ denote the group acting on transcripts
by re-randomization of the prover's internal randomness.
\end{definition}

\begin{theorem}[$\Sigma$-Protocols as Torsors]
Let $\Pi$ be a $\Sigma$-protocol with honest-verifier zero-knowledge.
Then the sheaf $\mathcal{F}_{\Pi}$ is a torsor under the action of $G_{\Pi}$
in the topos of sheaves on $(\mathcal{C}_{\Pi}, J_{\Pi})$.
\end{theorem}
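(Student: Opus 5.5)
The plan is to verify the two defining conditions of a $G_{\Pi}$-torsor in $\mathrm{Sh}(\mathcal{C}_{\Pi},J_{\Pi})$. First, the action map $G_{\Pi}\times\mathcal{F}_{\Pi}\to\mathcal{F}_{\Pi}$ must be free and transitive, i.e.\ the shear map $(g,s)\mapsto(gs,s)$ is an isomorphism of sheaves. Second, $\mathcal{F}_{\Pi}\to 1$ must be an epimorphism of sheaves, i.e.\ $\mathcal{F}_{\Pi}$ is locally inhabited. Here $G_{\Pi}$ is regarded as the constant (or representable) group sheaf obtained by sheafifying the constant presheaf on the randomness group. By the theorem ``Zero-Knowledge as the Sheaf Condition'', $\mathcal{F}_{\Pi}$ is already a sheaf. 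Since both conditions are local, and sheafification preserves finite limits, it suffices to check them on sections over each object $U$ of $\mathcal{C}_{\Pi}$, up to refinement by $J_{\Pi}$-covers.

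First I will make the action explicit objectwise, modelled on the Schnorr case. For a view $U$, a section of $\mathcal{F}_{\Pi}(U)$ is a tuple of internal randomness and hidden transcript components consistent with $U$. Re-randomization $\rho\mapsto\rho+\delta$ shifts the commitment and the response in a way fixed by the protocol's homomorphic structure, as $(r,z)\mapsto(r+\delta,z+\delta)$ does for Schnorr. Naturality with respect to the erasure maps is immediate, since forgetting a component commutes with shifting the randomness. Freeness then follows because the randomness coordinate is part of every section and $\delta$ acts on it by translation. For transitivity on each nonempty $\mathcal{F}_{\Pi}(U)$, two sections consistent with the same view differ by a unique randomness shift. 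I will deduce this from special soundness: if two accepting completions shared $(a,e)$ but were not related by a shift, the extractor would recover a witness. So over views containing $(a,e)$, responses are determined by the randomness, and the difference of randomness values gives the required group element.

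Next comes local inhabitedness. For each $U$, I will exhibit a $J_{\Pi}$-covering family $\{U_i\to U\}$ with $\mathcal{F}_{\Pi}(U_i)\neq\varnothing$. This is where honest-verifier zero-knowledge enters. The HVZK simulator, run on a chosen challenge $e$, outputs an accepting $(a,e,z)$, and hence a section on the refined view. By definition of the attacker topology, the family of such simulatable refinements is a cover, exactly as in the Example where $\{(a,e)\to a,(a,z)\to a\}$ covers $a$. Combining local nonemptiness with the freeness and transitivity established above shows that the shear map is an isomorphism. Hence $\mathcal{F}_{\Pi}$ is a $G_{\Pi}$-torsor.

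The hardest step is transitivity, together with the compatibility of the action with the topology. $J_{\Pi}$ is defined only up to computational indistinguishability, so sections are not literally equal, only indistinguishable. I would first fix this by working with the exact support sets as in the Schnorr definition of $\mathcal{F}$. Indistinguishability would then be used only to justify that the covers lie in $J_{\Pi}$, never to identify sections. For a general $\Sigma$-protocol without a group structure on its randomness, I would also need to assume that $G_{\Pi}$ acts simply transitively on the prover's randomness space. I would state this as part of the meaning of ``re-randomization group''.
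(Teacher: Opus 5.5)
Your local-inhabitedness step is the same idea as the paper's sketch, which says the simulator supplies local sections. The step that carries the real content of the torsor claim, however, fails as you set it up: that $G_{\Pi}$ acts freely and transitively on each nonempty $\mathcal{F}_{\Pi}(U)$.

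Re-randomization $r\mapsto r+\delta$ is not an endomorphism of $\mathcal{F}_{\Pi}(U)$ for a fixed view $U$. It sends the commitment $a=g^r$ to $a\cdot g^{\delta}$ and the response $z$ to $z+\delta$. So it carries data consistent with $U$ to data consistent with a \emph{different} view: it permutes the objects of $\mathcal{C}_{\Pi}$ rather than acting on the fibres of the sheaf. Your sentence ``freeness follows because $\delta$ translates the randomness coordinate'' holds only because the translated section has left $\mathcal{F}_{\Pi}(U)$.

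Concretely, take Schnorr. Over an accepting full view $(a,e,z)$ the sections form a single point. This is true in the paper's definition, where the set is $\{\ast\}$, and in your reading, where it is the unique $r=\log_g a$. In your reading the same holds over $(a,e)$. A group of order $q$ cannot act freely on a one-point set, so the shear map $\mathbb{Z}_q\times\mathcal{F}(U)\to\mathcal{F}(U)\times\mathcal{F}(U)$ is not a bijection there. Only the paper's literal $\mathcal{F}(a,e)$, which omits the constraint $a=g^r$, is preserved by your shift.

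Assuming that $G_{\Pi}$ acts simply transitively on the whole randomness space does not help. You would need a simply transitive action on the randomness \emph{consistent with} each $U$, i.e.\ an action of the stabilizer of the observed data. For Schnorr that stabilizer is trivial as soon as $a$ is observed.

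The appeal to special soundness for transitivity is also a misapplication. Special soundness extracts a witness from two accepting transcripts $(a,e,z)$ and $(a,e',z')$ with $e\neq e'$. It says nothing about two completions sharing the same $(a,e)$; that is the separate unique-responses property, which fails for example for Okamoto's protocol. Extracting a witness is also no contradiction, since the statement is true and the honest prover holds $x$. For honest sections, $z$ is determined by $(r,e)$ by the definition of $\mathsf{Response}$, so soundness plays no role there; the obstruction is the one described above.

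The paper's own sketch does not supply this step either. It uses the simulator for local triviality and invokes soundness only to exclude a global section, which is not one of the torsor axioms. It never checks that the action is free and transitive. Completing your argument would require changing the site or the acting group object so that re-randomization actually preserves each $\mathcal{F}_{\Pi}(U)$.
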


\begin{proof}[Proof sketch]
Local trivializations are induced by the simulator,
which provides canonical local sections.
Non-existence of a global section follows from soundness:
a global section would determine a witness,
contradicting the zero-knowledge property.
\end{proof}

\begin{corollary}[Structural Interpretation of Security Properties]
For a $\Sigma$-protocol $\Pi$:
\begin{itemize}
  \item honest-verifier zero-knowledge corresponds to local triviality
        of the torsor $\mathcal{F}_{\Pi}$,
  \item special soundness corresponds to the obstruction
        to the existence of global sections.
\end{itemize}
\end{corollary}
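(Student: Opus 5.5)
The corollary is essentially a repackaging of the two preceding theorems, so I would prove it by splitting it into its two bullet points and deriving each from an earlier result together with the definitions of the attacker topology $J_{\Pi}$ and the re-randomization group $G_{\Pi}$. The phrase ``corresponds to'' has to be given a precise meaning before anything can be proved. I would read it as a two-way implication at the level of the site: a property of $\Pi$ holds if and only if the associated geometric condition on $\mathcal{F}_{\Pi}$ holds in $\mathrm{Sh}(\mathcal{C}_{\Pi}, J_{\Pi})$. The forward directions are supplied by Theorem (Zero-Knowledge as the Sheaf Condition) and Theorem ($\Sigma$-Protocols as Torsors). The converse directions are new, and they are where the real work lies.

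For the first bullet I would take the following route. Local triviality of a $G_{\Pi}$-torsor means that for each view $U$ there is a $J_{\Pi}$-covering $\{U_i \to U\}$ and sections $s_i \in \mathcal{F}_{\Pi}(U_i)$. Each such section gives an equivariant isomorphism $\mathcal{F}_{\Pi}|_{U_i} \cong G_{\Pi}|_{U_i}$.

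Forward direction: honest-verifier zero-knowledge gives, via the torsor theorem, exactly such local sections, namely the simulator outputs.

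Converse: suppose local sections exist over a covering of the full transcript object $(a,e,z)$. By the definition of $J_{\Pi}$, a covering family is one from which a PPT simulator reconstructs the distribution on the target. So composing the local sections with the covering simulator yields an algorithm that outputs accepting transcripts distributed indistinguishably from real ones, using only the public data. That algorithm is an HVZK simulator.

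I would first check this equivalence on the Schnorr example. There the simulator picks $(e,z)$ and sets $a = g^z y^{-e}$, and the $\mathbb{Z}_q$-action $r \mapsto r+t$ visibly trivializes $\mathcal{F}$ over the covering $\{(a,e)\to a,\ (a,z)\to a\}$.

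For the second bullet I would show that a global section $s \in \mathcal{F}_{\Pi}(\mathbf{1})$ that is compatible across two distinct challenges $e \neq e'$ on a common commitment $a$ produces two accepting transcripts $(a,e,z)$ and $(a,e',z')$. Special soundness then extracts a witness; in the Schnorr case $x = (z-z')/(e-e')$. This has two consequences.

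First, if the witness is hard to compute, no efficiently describable global section exists. So special soundness obstructs global sections.

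Second, read the other way, any global trivialization $\mathcal{F}_{\Pi} \cong G_{\Pi}$ would let one answer every challenge for a fixed commitment. That is the standard contrapositive of special soundness. I would phrase this obstruction as the nontriviality of the class $[\mathcal{F}_{\Pi}] \in H^1(\mathcal{C}_{\Pi}, G_{\Pi})$, since a torsor is trivial exactly when it has a global section.

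The main obstacle is making the converse of the first bullet and the extraction step rigorous. The attacker topology is defined only ``up to computational indistinguishability'', so it is not obviously a Grothendieck topology in the strict sense. Two properties need proof: stability under pullback, and transitivity, which amounts to composing simulators while keeping polynomial running time and a negligible distinguishing gap.

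I would first try to fix this by replacing $\mathbf{Set}$ with sets of distribution ensembles modulo indistinguishability. Hybrid arguments would then give transitivity, provided the covering sieves have bounded depth, which holds because $\mathcal{C}_{\Pi}$ is a finite poset of partial transcripts. If that fails, I would restrict the corollary to the perfect HVZK case, as in Schnorr, where equality of distributions makes all the gluing exact.
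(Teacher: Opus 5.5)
Your forward directions are exactly the paper's argument. The corollary is simply read off from the torsor theorem, whose sketch says that the simulator supplies local sections and that a global section would determine a witness. Your appeal to hardness of the witness is more accurate than the paper's ``contradicting the zero-knowledge property''. The converses are your own addition.

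The genuine gap, which the paper's sketch shares, is more basic than the Grothendieck-topology issue you flag. Sections of a $\mathbf{Set}$-valued sheaf are bare elements with no notion of efficient computability. In the paper's model they are the prover's private data, not simulator outputs. Concretely:
\begin{itemize}
  \item $\mathcal{F}(a)=\{\log_g a\}$ is a singleton.
  \item Any $(r,z)\in\mathcal{F}(a,e)$ with $e\neq 0$ already gives $x=(z-r)e^{-1}$. So local sections over $(a,e)$ are exactly as hard to produce as the witness.
  \item The Schnorr simulator sets $a=g^{z}y^{-e}$ without ever learning $r$. It therefore produces no element of $\mathcal{F}(a)$ or $\mathcal{F}(a,e)$, and hence no local sections on the cover $\{(a,e)\to a,\ (a,z)\to a\}$.
  \item Because $\mathcal{F}$ is defined using $x$, the family $a\mapsto\log_g a$ (with its restrictions $(r,r+ex)$ and $\ast$) is a genuine global section.
\end{itemize}
So hardness of the witness obstructs nothing in $\mathrm{Sh}(\mathcal{C}_{\Pi},J_{\Pi})$. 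Were $\mathcal{F}$ a torsor, its class in $H^1$ would be trivial. ``No \emph{efficiently describable} global section'' is not a statement the topos can express.

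Your planned Schnorr check fails as well, for a related reason. Re-randomization $r\mapsto r+t$ carries $\mathcal{F}(a)$ to $\mathcal{F}(ag^{t})$, so it is not an action on $\mathcal{F}(a)$, and no action of $\mathbb{Z}_q$ on a singleton is free. Since local triviality is part of the definition of a torsor, the first bullet can only mean that $\mathcal{F}_{\Pi}$ is a $G_{\Pi}$-torsor at all, and that is exactly what fails.

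Your repair moves the failure instead of removing it. Suppose sections become samplable distributions of views modulo indistinguishability, with compatibility meaning agreement of marginals. Then the HVZK simulator's output, together with its marginals, is itself a global section. Meanwhile the extraction step needs one and the same nonce to answer both $e$ and $e'$, that is, a coupling, which agreement of marginals does not provide. Restricting to perfect HVZK changes nothing, because the issue is what the sections are, not the size of the distinguishing gap.

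The converses have separate problems.
\begin{itemize}
  \item The converse of the second bullet is never actually given. ``A global trivialization answers every challenge'' is the forward extraction again.
  \item The obstruction you derive comes from hardness of the underlying relation, not from special soundness. Special soundness holds just as well for relations with easy witnesses.
  \item The converse of the first bullet is circular. Morphisms into $(a,e,z)$ come only from richer views, so its only candidate covers are the identity and the empty family. The identity's ``simulator'' takes the transcript itself as input. Under the paper's definition, the empty family covers precisely when simulation from public data, i.e.\ HVZK, is already given.
\end{itemize}

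What is missing, from your plan and from the paper alike, is a site whose sections carry computational content and whose compatibility condition is sensitive to coupling. Only in such a site could simulators genuinely yield local sections while special soundness plus hardness of the relation genuinely obstructs global ones.
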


\noindent \bf Remark\rm. 
This formulation is independent of the concrete algebraic structure
of the underlying protocol and depends only on its transcript structure
and simulation properties.

\section{Discussion and Future Work}

The sheaf-theoretic formulation presented here provides a structural
explanation of simulation-based security for $\Sigma$-protocols.
Rather than treating simulation as an external artifact,
it emerges naturally from the geometry of attacker observations.

This work represents a first step toward a geometric foundation of
cryptographic security.
Extensions to malicious verifiers, protocol composition,
and stronger security notions are left for future investigation.

We further hope to investigate composability in this line deeply.

$$ $$

\noindent Takao Inou\'{e}

\noindent Faculty of Informatics

\noindent Yamato University

\noindent Katayama-cho 2-5-1, Suita, Osaka, 564-0082, Japan

\noindent inoue.takao@yamato-u.ac.jp
 
\noindent (Personal) takaoapple@gmail.com (I prefer my personal mail)


\begin{thebibliography}{99}

\bibitem{AbadiGordon}
M.~Abadi and A.~D.~Gordon,
\newblock \emph{A Calculus for Cryptographic Protocols: The Spi Calculus},
\newblock Information and Computation, 148(1):1--70, 1999.

\bibitem{AbramskyCoecke}
S.~Abramsky and B.~Coecke,
\newblock \emph{Categorical Quantum Mechanics},
\newblock In \emph{Handbook of Quantum Logic and Quantum Structures},
Elsevier, 2009.

\bibitem{BellareRogaway}
M.~Bellare and P.~Rogaway,
\newblock \emph{Introduction to Modern Cryptography},
\newblock UCSD Course Notes, 2005.

\bibitem{CramerDamgardSchoenmakers1994}
R.~Cramer, I.~Damg{\aa}rd, and B.~Schoenmakers,
\newblock Proofs of partial knowledge and simplified design of witness hiding protocols,
\newblock In \emph{Advances in Cryptology -- CRYPTO '94},
Lecture Notes in Computer Science, vol.~839,
Springer, 1994, pp.~174--187.

\bibitem{Damgard}
I.~Damg{\aa}rd,
\newblock \emph{On $\Sigma$-Protocols},
\newblock Lecture Notes, Aarhus University, 2002.

\bibitem{GMR}
S.~Goldwasser, S.~Micali, and C.~Rackoff,
\newblock \emph{The Knowledge Complexity of Interactive Proof Systems},
\newblock SIAM Journal on Computing, 18(1):186--208, 1989.

\bibitem{GoldreichFOC}
O.~Goldreich,
\newblock \emph{Foundations of Cryptography, Volume~1: Basic Tools},
\newblock Cambridge University Press, 2001.

\bibitem{GoldreichFOC2}
O.~Goldreich,
\newblock \emph{Foundations of Cryptography, Volume~2: Basic Applications},
\newblock Cambridge University Press, 2004.

\bibitem{Mitchell}
J.~C.~Mitchell,
\newblock \emph{Foundations for Programming Languages},
\newblock MIT Press, 1996.

\bibitem{Schnorr}
C.~P.~Schnorr,
\newblock \emph{Efficient Identification and Signatures for Smart Cards},
\newblock In \emph{Advances in Cryptology -- CRYPTO '89},
Lecture Notes in Computer Science, vol.~435,
Springer, 1990.

\end{thebibliography}
\end{document}